\newtheorem{theorem}{Theorem}[section]
\newtheorem{proposition}{Proposition}[section]
\DeclareMathOperator*{\argmin}{arg\,min}
\title{Pipeline PSRO: A Scalable Approach for Finding Approximate Nash Equilibria in Large Games}
\newcommand{\printfnsymbol}[1]{%
  \textsuperscript{\@fnsymbol{#1}}%
}
\newcommand{\BR}{\mbox{BR}}
\author{%
  Stephen McAleer\thanks{Authors contributed equally} \\
  Department of Computer Science\\
  University of California, Irvine\\
  Irvine, CA \\
  \texttt{smcaleer@uci.edu} \\
   \And
   John Lanier\printfnsymbol{1} \\
   Department of Computer Science\\
   University of California, Irvine\\
   Irvine, CA \\
   \texttt{jblanier@uci.edu} \\
   \AND
   Roy Fox \\
   Department of Computer Science\\
   University of California, Irvine\\
   Irvine, CA \\
   \texttt{royf@uci.edu} \\
   \And
   Pierre Baldi \\
   Department of Computer Science\\
   University of California, Irvine\\
   Irvine, CA \\
   \texttt{pfbaldi@ics.uci.edu} \\
}
\begin{document}

\maketitle

\begin{abstract}
    Finding approximate Nash equilibria in zero-sum imperfect-information games is challenging when the number of information states is large. Policy Space Response Oracles (PSRO) is a deep reinforcement learning algorithm grounded in game theory that is guaranteed to converge to an approximate Nash equilibrium. However, PSRO requires training a reinforcement learning policy at each iteration, making it too slow for large games.
    We show through counterexamples and experiments that DCH and Rectified PSRO, two existing approaches to scaling up PSRO, fail to converge even in small games. 
    We introduce Pipeline PSRO (P2SRO), the first scalable PSRO-based method for finding approximate Nash equilibria in large zero-sum imperfect-information games. P2SRO is able to parallelize PSRO with convergence guarantees by maintaining a hierarchical pipeline of reinforcement learning workers, each training against the policies generated by lower levels in the hierarchy.
    We show that unlike existing methods, P2SRO converges to an approximate Nash equilibrium,
    and does so faster as the number of parallel workers increases, across a variety of imperfect information games.
    We also introduce an open-source environment for Barrage Stratego, a variant of Stratego 
    with an approximate game tree complexity of $10^{50}$.
    P2SRO is able to achieve state-of-the-art performance on Barrage Stratego and beats all existing bots. Experiment code is available at  \url{https://github.com/JBLanier/pipeline-psro}.

\end{abstract}

\section{Introduction}
A long-standing goal in artificial intelligence and algorithmic game theory has been to develop a general algorithm which is capable of finding approximate Nash equilibria in large imperfect-information two-player zero-sum games. AlphaStar \citep{alphastar} and OpenAI Five \citep{dota} were able to demonstrate that variants of self-play reinforcement learning are capable of achieving expert-level performance in large imperfect-information video games. However, these methods are not principled from a game-theoretic point of view and are not guaranteed to converge to an approximate Nash equilibrium. Policy Space Response Oracles (PSRO) \citep{psro} is a game-theoretic reinforcement learning algorithm based on the Double Oracle algorithm and is guaranteed to converge to an approximate Nash equilibrium.

PSRO is a general, principled method for finding approximate Nash equilibria, but it may not scale to large games because it is a sequential algorithm that uses reinforcement learning to train a full best response at every iteration. Two existing approaches parallelize PSRO: Deep Cognitive Hierarchies (DCH) \citep{psro} and Rectified PSRO \citep{rectified_psro}, but both have counterexamples on which they fail to converge to an approximate Nash equilibrium, and as we show in our experiments, neither reliably converges in random normal form games. 

Although DCH approximates PSRO, it has two main limitations. First, DCH needs the same number of parallel workers as the number of best response iterations that PSRO takes. For large games, this requires a very large number of parallel reinforcement learning workers. This also requires guessing how many iterations the algorithm will need before training starts. Second, DCH keeps training policies even after they have plateaued. This introduces variance by allowing the best responses of early levels to change each iteration, causing a ripple effect of instability. 
We find that, in random normal form games, DCH rarely converges to an approximate Nash equilibrium even with a large number of parallel workers, unless their learning rate is carefully annealed. 

Rectified PSRO is a variant of PSRO in which each learner only plays against other learners that it already beats. We prove by counterexample that Rectified PSRO is not guaranteed to converge to a Nash equilibrium. We also show that Rectified PSRO rarely converges in random normal form games. 

In this paper we introduce Pipeline PSRO (P2SRO), the first scalable PSRO-based method for finding approximate Nash equilibria in large zero-sum imperfect-information games. P2SRO is able to scale up PSRO with convergence guarantees by maintaining a hierarchical pipeline of reinforcement learning workers, each training against the policies generated by lower levels in the hierarchy. P2SRO has two classes of policies: fixed and active. Active policies are trained in parallel while fixed policies are not trained anymore. Each parallel reinforcement learning worker trains an active policy in a hierarchical pipeline, training against the meta Nash equilibrium of both the fixed policies and the active policies on lower levels in the pipeline. Once the performance increase of the lowest-level active worker in the pipeline does not improve past a given threshold in a given amount of time, the policy becomes fixed, and a new active policy is added to the pipeline. P2SRO is guaranteed to converge to an approximate Nash equilibrium. Unlike Rectified PSRO and DCH, P2SRO converges to an approximate Nash equilibrium across a variety of imperfect information games such as Leduc poker and random normal form games. 

We also introduce an open-source environment for Barrage Stratego, a variant of Stratego. Barrage Stratego is a large two-player zero sum imperfect information board game with an approximate game tree complexity of $10^{50}$. 
We demonstrate that P2SRO is able to achieve state-of-the-art performance on Barrage Stratego, beating all existing bots. 

To summarize, in this paper we provide the following contributions:
\begin{itemize}
\item We develop a method for parallelizing PSRO which is guaranteed to converge to an approximate Nash equilibrium, and show that this method outperforms existing methods on random normal form games and Leduc poker. 
\item We present theory analyzing the performance of PSRO as well as a counterexample where Rectified PSRO does not converge to an approximate Nash equilibrium.  
\item We introduce an open-source environment for Stratego and Barrage Stratego, and demonstrate state-of-the-art performance of P2SRO on Barrage Stratego. 
\end{itemize}

\section{Background and Related Work}

A two-player normal-form game is a tuple $(\Pi, U)$, where $\Pi = (\Pi_1, \Pi_2)$ is the set of policies (or strategies), one for each player, and $U : \Pi \rightarrow \mathbb{R}^2$ is a payoff table of utilities for each joint policy played by all players. For the game to be zero-sum, for any pair of policies $\pi \in \Pi$, the payoff $u_i(\pi)$ to player $i$ must be the negative of the payoff $u_{-i}(\pi)$ to the other player, denoted $-i$. Players try to maximize their own expected utility by sampling from a distribution over the policies $\sigma_i \in \Sigma_i = \Delta(\Pi_i)$. The set of best responses to a mixed policy $\sigma_i$ is defined as the set of policies that maximally exploit the mixed policy: $\BR(\sigma_i) =  \argmin_{\sigma_{-i}' \in \Sigma_{-i}} u_i(\sigma_{-i}', \sigma_{i})$, where $u_i(\sigma) = \mathrm{E}_{\pi \sim \sigma}[ u_i(\pi) ]$. The exploitability of a pair of mixed policies $\sigma$ is defined as: \textsc{Exploitability}$(\sigma) = \frac{1}{2}( u_2(\sigma_{1}, \BR(\sigma_1)) + u_1(\BR(\sigma_2), \sigma_2) ) \ge 0$. A pair of mixed policies $\sigma = (\sigma_1, \sigma_2)$ is a Nash equilibrium if \textsc{Exploitability}$(\sigma) = 0$. An approximate Nash equilibrium at a given level of precision $\epsilon$ is a pair of mixed policies $\sigma$ such that \textsc{Exploitability}$(\sigma) \le \epsilon$ \citep{shoham2008multiagent}. 

In small normal-form games, Nash equilibria can be found via linear programming \citep{agt}. However, this quickly becomes infeasible when the size of the game increases. In large normal-form games, no-regret algorithms such as fictitious play, replicator dynamics, and regret matching can asymptotically find approximate Nash equilibria \citep{theory_of_learning_in_games, replicator, cfr}. Extensive form games extend normal-form games and allow for sequences of actions. Examples of perfect-information extensive form games include chess and Go, and examples of imperfect-information extensive form games include poker and Stratego. 


In perfect information extensive-form games, algorithms based on minimax tree search have had success on games such as checkers, chess and Go \citep{silver_2017}. 
Extensive-form fictitious play (XFP) \citep{xfp} and counterfactual regret minimization (CFR) \citep{cfr} extend fictitious play and regret matching, respectively, to extensive form games. In large imperfect information games such as heads up no-limit Texas Hold 'em, counterfactual regret minimization has been used on an abstracted version of the game to beat top humans \citep{brown2018superhuman}. However, this is not a general method because finding abstractions requires expert domain knowledge and cannot be easily done for different games. For very large imperfect information games such as Barrage Stratego, it is not clear how to use abstractions and CFR. Deep CFR \citep{deep_cfr} is a general method that trains a neural network on a buffer of counterfactual values. However, Deep CFR uses external sampling, which may be impractical for games with a large branching factor such as Stratego and Barrage Stratego. DREAM \citep{steinberger2020dream} and ARMAC \citep{gruslys2020advantage} are model-free regret-based deep learning approaches. Current Barrage Stratego bots are based on imperfect information tree search and are unable to beat even intermediate-level human players \citep{stratego_bot, stratego_bots}. 

Recently, deep reinforcement learning has proven effective on high-dimensional sequential decision making problems such as Atari games and robotics \citep{deep_rl}. AlphaStar \citep{alphastar} beat top humans at Starcraft using self-play and population-based reinforcement learning. Similarly, OpenAI Five \citep{dota} beat top humans at Dota using self play reinforcement learning. Similar population-based methods have achieved human-level performance on Capture the Flag \citep{pbt}. 
However, these algorithms are not guaranteed to converge to an approximate Nash equilibrium.
Neural Fictitious Self Play (NFSP) \citep{nfsp} approximates extensive-form fictitious play by progressively training a best response against an average of all past policies using reinforcement learning. The average policy is represented by a neural network and is trained via supervised learning using a replay buffer of past best response actions. This replay buffer may become prohibitively large in complex games.


\subsection{Policy Space Response Oracles}
The Double Oracle algorithm \citep{double_oracle} is an algorithm for finding a Nash equilibrium in normal form games. The algorithm works by keeping a population of policies $\Pi^t \subset \Pi$ at time $t$. Each iteration a Nash equilibrium $\sigma^{*,t}$ is computed for the game restricted to policies in $\Pi^t$. Then, a best response to this Nash equilibrium for each player $\BR(\sigma^{*,t}_{-i})$ is computed and added to the population $\Pi_i^{t+1} = \Pi_i^t \cup \{ \BR(\sigma^{*,t}_{-i}) \}$ for $i \in \{1, 2\}$.



Policy Space Response Oracles (PSRO) approximates the Double Oracle algorithm. The meta Nash equilibrium is computed on the empirical game matrix $U^\Pi$, given by having each policy in the population $\Pi$ play each other policy and tracking average utility in a payoff matrix. In each iteration, an approximate best response to the current meta Nash equilibrium over the policies is computed via any reinforcement learning algorithm. In this work we use a discrete-action version of Soft Actor Critic (SAC), described in Section \ref{impl}.

One issue with PSRO is that it is based on a normal-form algorithm, and the number of pure strategies in a normal form representation of an extensive-form game is exponential in the number of information sets. In practice, however, PSRO is able to achieve good performance in large games, possibly because large sections of the game tree correspond to weak actions, so only a subset of pure strategies need be enumerated for satisfactory performance. Another issue with PSRO is that it is a sequential algorithm, requiring a full best response computation in every iteration. This paper addresses the latter problem by parallelizing PSRO while maintaining the same convergence guarantees.

DCH \citep{psro} parallelizes PSRO by training multiple reinforcement learning agents, each against the meta Nash equilibrium of agents below it in the hierarchy. 
A problem with DCH is that one needs to set the number of workers equal to the number of policies in the final population beforehand.
For large games such as Barrage Stratego, this might require hundreds of parallel workers. Also, in practice, DCH fails to converge in small random normal form games even with an exact best-response oracle and a learning rate of $1$, because early levels may change their best response occasionally due to randomness in estimation of the meta Nash equilibrium. In our experiments and in the DCH experiments in \citet{psro}, DCH is unable to achieve low exploitability on Leduc poker.

Another existing parallel PSRO algorithm is Rectified PSRO \citep{rectified_psro}. Rectified PSRO assigns each learner to play against the policies that it currently beats. However, we prove that Rectified PSRO does not converge to a Nash equilibrium in all symmetric zero-sum games. In our experiments, Rectified PSRO rarely converges to an approximate Nash equilibrium in random normal form games.   

\section{Pipeline Policy Space Response Oracles (P2SRO)}


\begin{figure}
    \centering
    \includegraphics[width=100mm]{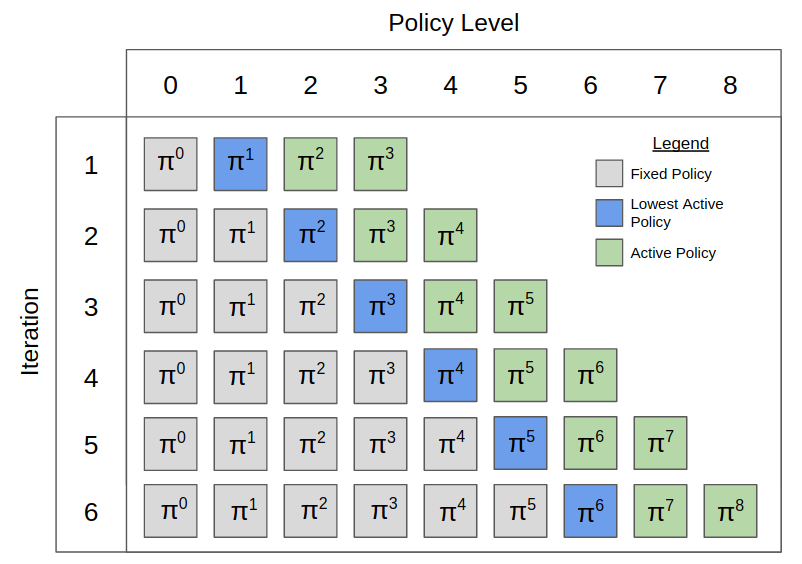}
    \caption{Pipeline PSRO. The lowest-level active policy $\pi^j$ (blue) plays against the meta Nash equilibrium $\sigma^{*,j}$ of the lower-level fixed policies in $\Pi^f$ (gray). Each additional active policy (green) plays against the meta Nash equilibrium 
    of the fixed and training policies in levels below it. 
    Once the lowest active policy plateaus, it becomes fixed, a new active policy is added, and the next active policy becomes the lowest active policy. In the first iteration, the fixed population consists of a single random policy.}
    \label{fig:pipeline}
\end{figure}


Pipeline PSRO (P2SRO; Algorithm \ref{alg:psro}) is able to scale up PSRO with convergence guarantees by maintaining a hierarchical pipeline of reinforcement learning policies, each training against the policies in the lower levels of the hierarchy (Figure \ref{fig:pipeline}). P2SRO has two classes of policies: fixed and active. The set of fixed policies are denoted by $\Pi^f$ and do not train anymore, but remain in the fixed population. The parallel reinforcement learning workers train the active policies, denoted $\Pi^a$ in a hierarchical pipeline, training against the meta Nash equilibrium distribution of both the fixed policies and the active policies in levels below them in the pipeline. The entire population $\Pi$ consists of the union of $\Pi^f$ and $\Pi^a$. 
For each policy $\pi_i^j$ in the active policies $\Pi_i^a$, to compute the distribution of policies to train against, a meta Nash equilibrium $\sigma_{-i}^{*,j}$ is periodically computed on policies lower than $\pi_i^j$: $\Pi_{-i}^f \cup \{\pi_{-i}^k \in \Pi_{-i}^a | k < j\}$ and $\pi_i^j$ trains against this distribution.

The performance of a policy $\pi^j$ is given by the average performance during training $\mathbb{E}_{\pi_{1} \sim \sigma_{1}^{*,j}}[u_2(\pi_1, \pi_2^j)] + \mathbb{E}_{\pi_{2} \sim \sigma_{2}^{*,j}}[u_1(\pi_1^j, \pi_2)]$ against the meta Nash equilibrium distribution $\sigma^{*,j}$. Once the performance of the lowest-level active policy $\pi^j$ in the pipeline does not improve past a given threshold in a given amount of time, we say that the policy's performance plateaus, and $\pi^j$ becomes fixed and is added to the fixed population $\Pi^f$. Once $\pi^j$ is added to the fixed population $\Pi^f$, then $\pi^{j+1}$ becomes the new lowest active policy. A new policy is initialized and added as the highest-level policy in the active policies $\Pi^a$. Because the lowest-level policy only trains against the previous fixed policies $\Pi^f$, 
P2SRO maintains the same convergence guarantees as PSRO. Unlike PSRO, however, each policy in the pipeline above the lowest-level policy is able to get a head start by pre-training against the moving target of the meta Nash equilibrium of the policies below it. Unlike Rectified PSRO and DCH, P2SRO converges to an approximate Nash equilibrium across a variety of imperfect information games such as Leduc Poker and random normal form games. 

In our experiments we model the non-symmetric games of Leduc poker and Barrage Stratego as symmetric games by training one policy that can observe which player it is at the start of the game and play as either the first or the second player. We find that in practice it is more efficient to only train one population than to train two different populations, especially in larger games, such as Barrage Stratego.

\begin{algorithm}[t]
\begin{algorithmic}
    \STATE {\bfseries Input:} Initial policy sets for all players $\Pi^f$
    \STATE Compute expected utilities for empirical payoff matrix $U^\Pi$ for each joint $\pi \in \Pi$ 
    \STATE Compute meta-Nash equilibrium $\sigma^{*,j}$ over fixed policies $(\Pi^f)$ 
    \FOR{many episodes}
    \FORALL{$\pi^j \in \Pi^a$ in parallel}
    \FOR{player $i \in \{1, 2\}$}
    \STATE Sample $\pi_{-i} \sim \sigma_{-i}^{*,j}$ 
    \STATE Train $\pi_i^j$ against $\pi_{-i}$
    \ENDFOR
    \IF{$\pi^j$ plateaus and $\pi^j$ is the lowest active policy}
    \STATE $\Pi^f = \Pi^f \cup \{ \pi^j \}$ 
    \STATE Initialize new active policy at a higher level than all existing active policies 
    \STATE Compute missing entries in $U^\Pi$ from $\Pi$
    \STATE Compute meta Nash equilibrium 
    for each active policy
    \ENDIF
    \STATE Periodically compute meta Nash equilibrium for each active policy
    \ENDFOR
    \ENDFOR
    \STATE Output current meta Nash equilibrium on whole population $\sigma^*$ \;
\caption{Pipeline Policy-Space Response Oracles\label{alg:psro}}
\end{algorithmic}
\end{algorithm}

\subsection{Implementation Details}\label{impl}
For the meta Nash equilibrium solver we use fictitious play \citep{theory_of_learning_in_games}. Fictitious play is a simple method for finding an approximate Nash equilibrium in normal form games. Every iteration, a best response to the average strategy of the population is added to the population. The average strategy converges to an approximate Nash equilibrium. For the approximate best response oracle, we use a discrete version of Soft Actor Critic (SAC) \citep{sac, discrete_sac}. We modify the version used in RLlib \citep{rllib, ray} to account for discrete actions. 


\subsection{Analysis}

PSRO is guaranteed to converge to an approximate Nash equilibrium and doesn't need a large replay buffer, unlike NFSP and Deep CFR. In the worst case, all policies in the original game must be added before PSRO reaches an approximate Nash equilibrium. Empirically, on random normal form games, PSRO performs better than selecting pure strategies at random without replacement. This implies that in each iteration, PSRO is more likely than random to add a pure strategy that is part of the support of the Nash equilibrium of the full game, suggesting the conjecture that PSRO has faster convergence rate than random strategy selection. The following theorem indirectly supports this conjecture.

\begin{theorem}\label{thm:psro}
Let $\sigma$ be a Nash equilibrium of a symmetric normal form game $(\Pi, U)$ and let $\Pi^e$ be the set of pure strategies in its support. Let $\Pi' \subset \Pi$ be a population that does not cover $\Pi^e \not\subseteq \Pi'$, and let $\sigma'$ be the meta Nash equilibrium of the original game restricted to strategies in $\Pi'$. Then there exists a pure strategy $\pi \in \Pi^e \setminus \Pi'$ such that $\pi$ does not lose to $\sigma'$.
\end{theorem}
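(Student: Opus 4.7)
The plan is to exploit two basic facts about symmetric zero-sum games: a symmetric Nash equilibrium has game value zero, and consequently the equilibrium strategy earns a non-negative payoff against every opponent strategy. I will apply this twice, once to $\sigma$ in the full game and once to $\sigma'$ in the restricted game, and then compare the two.

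First I would observe that because the game is symmetric and zero-sum, $u(\sigma,\sigma)=0$, and the Nash property of the symmetric profile $(\sigma,\sigma)$ yields $u(\sigma,\tau)\ge 0$ for every mixed strategy $\tau$ on $\Pi$. Apply this with $\tau=\sigma'$ (viewing $\sigma'$, which is supported on $\Pi'\subseteq\Pi$, as a mixed strategy on $\Pi$) to obtain
\[
0 \;\le\; u(\sigma,\sigma') \;=\; \sum_{\pi\in\Pi^e} \sigma(\pi)\, u(\pi,\sigma').
\]
Next, by the same argument applied to the Nash equilibrium $\sigma'$ of the restricted game on $\Pi'$, we get $u(\pi,\sigma')\le u(\sigma',\sigma')=0$ for every pure strategy $\pi\in\Pi'$. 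In particular, every term in the sum above indexed by $\pi\in\Pi^e\cap\Pi'$ is non-positive.

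Splitting the sum according to whether the pure strategy lies in $\Pi'$ or not gives
\[
\sum_{\pi\in\Pi^e\setminus\Pi'}\sigma(\pi)\,u(\pi,\sigma') \;=\; u(\sigma,\sigma')\;-\;\sum_{\pi\in\Pi^e\cap\Pi'}\sigma(\pi)\,u(\pi,\sigma') \;\ge\; 0,
\]
where both terms on the right are non-negative by the two observations above. Since $\Pi^e\not\subseteq\Pi'$ the index set $\Pi^e\setminus\Pi'$ is non-empty, and every weight $\sigma(\pi)$ with $\pi\in\Pi^e$ is strictly positive by definition of the support. A non-negative weighted sum with positive weights must contain at least one non-negative summand, so there exists $\pi\in\Pi^e\setminus\Pi'$ with $u(\pi,\sigma')\ge 0$, i.e.\ $\pi$ does not lose to $\sigma'$.

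I do not expect a serious obstacle here; the only delicate point is bookkeeping around symmetry, namely justifying that $\sigma$ and $\sigma'$ may each be interpreted as a single mixed strategy used by either player, and that the restricted Nash property of $\sigma'$ really does give $u(\pi,\sigma')\le 0$ for \emph{all} pure strategies in $\Pi'$ rather than only those in the support of $\sigma'$. Both follow from symmetry together with the standard best-response characterization of Nash equilibria, so the argument is essentially a two-line averaging argument after these observations are in place.
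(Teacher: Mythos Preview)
Your argument is correct and follows essentially the same averaging approach as the paper's proof: both use that $\sigma$ being a Nash equilibrium gives $u(\sigma,\sigma')\ge 0$, expand this as a weighted sum over $\Pi^e$, and invoke the Nash property of $\sigma'$ on the restricted game to control the terms indexed by $\Pi'$, leaving a non-negative weighted sum over $\Pi^e\setminus\Pi'$. The only cosmetic difference is that the paper subtracts $\sigma'^\intercal G\sigma'=0$ and uses the equality $1_\pi^\intercal G\sigma'=0$ for $\pi$ in the support of $\sigma'$, whereas you work directly with $u(\sigma,\sigma')$ and use the (slightly weaker but sufficient) inequality $u(\pi,\sigma')\le 0$ for all $\pi\in\Pi'$.
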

\begin{proof}
Contained in supplementary material. 
\end{proof}


Ideally, PSRO would be able to add a member of $\Pi^e \setminus \Pi'$ to the current population $\Pi'$ at each iteration. However, the best response to the current meta Nash equilibrium $\sigma'$ is generally not a member of $\Pi^e$. Theorem \ref{thm:psro} shows that for an \emph{approximate} best response algorithm with a weaker guarantee of not losing to $\sigma'$, it is possible that a member of $\Pi^e \setminus \Pi'$ is added at each iteration.

Even assuming that a policy in the Nash equilibrium support is added at each iteration, the convergence of PSRO to an approximate Nash equilibrium can be slow because each policy is trained sequentially by a reinforcement learning algorithm. DCH, Rectified PSRO, and P2SRO are methods of speeding up PSRO through parallelization. 
In large games, many of the basic skills (such as extracting features from the board) may need to be relearned when starting each iteration from scratch. DCH and P2SRO are able to speed up PSRO by pre-training each level on the moving target of the meta Nash equilibrium of lower-level policies before those policies converge. This speedup would be linear with the number of parallel workers if each policy could train on the fixed final meta Nash equilibrium of the policies below it. Since it trains instead on a moving target, we expect the speedup to be sub-linear in the number of workers.

DCH is an approximation of PSRO that is not guaranteed to converge to an approximate Nash equilibrium if the number of levels is not equal to the number of pure strategies in the game, and is in fact guaranteed \emph{not} to converge to an approximate Nash equilibrium if the number of levels cannot support it.

Another parallel PSRO algorithm, Rectified PSRO, is not guaranteed to converge to an approximate Nash equilibrium.
\begin{proposition}
Rectified PSRO with an oracle best response does not converge to a Nash equilibrium in all symmetric two-player, zero-sum normal form games. 
\end{proposition}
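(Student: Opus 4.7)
The plan is to prove the proposition by explicit construction: exhibit a small symmetric two-player zero-sum normal form game on which Rectified PSRO, even when given an exact best-response oracle, produces a sequence of populations whose meta Nash equilibria have exploitability bounded away from zero. First I would fix the precise rectified update rule I am refuting: given current population $\Pi'$ with meta Nash $\sigma'$, for each $\pi_i \in \mathrm{supp}(\sigma')$ the algorithm adds $\BR(\tilde\sigma'_{-i})$, where $\tilde\sigma'_{-i}$ is $\sigma'_{-i}$ restricted (and renormalized) to the set of opponents that $\pi_i$ beats under $U$. My counterexample needs a population $\Pi^\star \subsetneq \Pi$ that is closed under this operator --- every rectified best response already lies in $\Pi^\star$ --- yet whose meta Nash has positive exploitability in the full game.

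To build $\Pi^\star$, I would start from a Rock-Paper-Scissors-style cycle on three strategies $\{A,B,C\}$ and augment it with one or two extra strategies $\{D,\ldots\}$ whose payoff rows are tuned so that (i) the true Nash equilibrium of the full game places positive mass on $D$, but (ii) for every $\pi_i \in \{A,B,C\}$, the set of opponents beaten by $\pi_i$ is exactly one of the cyclically dominated strategies, so the rectified best response against that single-strategy distribution is another element of the cycle. The cycle $\{A,B,C\}$ is then closed under the rectified update, and the algorithm stalls there regardless of initialization (I would also verify this for plausible initial seeds, e.g., a single pure strategy, since one can show Rectified PSRO from any seed in $\Pi^\star$ stays in $\Pi^\star$).

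The remaining step is to bound the exploitability of the meta Nash $\sigma^\star$ on $\Pi^\star$ in the full game by a positive constant $\varepsilon > 0$. Because $D$ is in the true Nash support, the payoffs of $D$ against the cycle strategies are tuned so that $D$ strictly exploits the uniform meta Nash on $\{A,B,C\}$ by at least $\varepsilon$, i.e., $u_1(D,\sigma^\star) > \varepsilon$. Since Rectified PSRO never adds $D$, the exploitability of its output is at least $\varepsilon$ at every iteration, so it does not converge to a Nash equilibrium.

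The main obstacle is the simultaneous satisfaction of closure and non-triviality: the "beats" relation must be sparse enough that the rectified best responses all fall back into $\Pi^\star$, while the full payoff matrix must be rich enough that the true Nash genuinely requires $D$. These two constraints interact nontrivially because changing a payoff to preserve closure of $\Pi^\star$ can shift the full-game Nash off of $D$. The design is therefore a careful tuning exercise: I would fix the cycle payoffs first, parameterize $D$'s row and column by a few variables, impose the closure conditions as inequalities on those variables, and then verify that the full-game Nash support still contains $D$ and that the induced exploitability lower bound $\varepsilon$ is strictly positive.
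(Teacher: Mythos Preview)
Your proposal is correct and takes essentially the same approach as the paper: the paper exhibits the concrete $4 \times 4$ matrix you describe abstractly --- Rock--Paper--Scissors augmented by a fourth strategy $D$ that beats each cycle strategy by $\tfrac{2}{5}$, small enough that every rectified best response (against the subset each cycle member beats or ties) stays in $\{R,P,S\}$, yet large enough that $D$ is the pure Nash of the full game and exploits the uniform meta-Nash on the cycle by $\tfrac{2}{5}$. The one step in your outline to sharpen is the clause ``so the rectified best response against that single-strategy distribution is another element of the cycle'': this does not follow from the beats-relation alone but from the magnitude inequality ($D$'s margin $\tfrac{2}{5}$ against any cycle strategy is strictly below the within-cycle margin $1$), which is precisely the closure constraint you identify in your final paragraph.
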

\begin{proof}
Consider the following symmetric two-player zero-sum normal form game: 

$$
\begin{bmatrix} 
0 & -1 & 1 & -\frac{2}{5} \\
1 & 0 & -1  & -\frac{2}{5}\\
-1 & 1 & 0  & -\frac{2}{5}\\
\frac{2}{5} & \frac{2}{5} & \frac{2}{5} & 0 \\
\end{bmatrix}
\quad
$$

This game is based on Rock–Paper–Scissors, with an extra strategy added that beats all other strategies and is the pure Nash equilibrium of the game. Suppose the population of Rectified PSRO starts as the pure Rock strategy. 

\begin{itemize}
    \item Iteration 1: Rock ties with itself, so a best response to Rock (Paper) is added to the population.
    \item Iteration 2: The meta Nash equilibrium over Rock and Paper has all mass on Paper. The new strategy that gets added is the best response to Paper (Scissors). 
    \item Iteration 3: The meta Nash equilibrium over Rock, Paper, and Scissors equally weights each of them. Now, for each of the three strategies, Rectified PSRO adds a best response to the meta-Nash-weighted combination of strategies that it beats or ties. Since Rock beats or ties Rock and Scissors, a best response to a $50-50$ combination of Rock and Scissors is Rock, with an expected utility of $\frac{1}{2}$. 
    Similarly, for Paper, since Paper beats or ties Paper and Rock, a best response to a $50-50$ combination of Paper and Rock is Paper. For Scissors, the best response for an equal mix of Scissors and Paper is Scissors. So in this iteration no strategy is added to the population and the algorithm terminates.  
\end{itemize}

We see that the algorithm terminates without expanding the fourth strategy. The meta Nash equilibrium of the first three strategies that Rectified PSRO finds are not a Nash equilibrium of the full game, and are exploited by the fourth strategy, which is guaranteed to get a utility of $\frac{2}{5}$ against any mixture of them.
\end{proof}

The pattern of the counterexample presented here is possible to occur in large games, which suggests that Rectified PSRO may not be an effective algorithm for finding an approximate Nash equilibrium in large games. Prior work has found that Rectified PSRO does not converge to an approximate Nash equilibrium in Kuhn Poker \citep{muller2019generalized}.   

\begin{proposition}
P2SRO with an oracle best response converges to a Nash equilibrium in all two-player, zero-sum normal form games.
\end{proposition}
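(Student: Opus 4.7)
The plan is to reduce convergence of P2SRO to the already-known convergence of the Double Oracle algorithm (which is the foundation for PSRO). The key observation is that the sequence of fixed policies $\Pi^f$ produced by P2SRO is exactly a valid Double Oracle trajectory, and that the parallel pre-training of higher-level active policies is irrelevant to the identity of that trajectory.

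First, I would note that by construction the lowest-level active policy $\pi^j$ trains only against the meta Nash equilibrium $\sigma^{*,j}$ computed over the fixed population $\Pi^f$ (there are, by definition, no active policies below it in the pipeline). With an oracle best response and the plateau criterion, when $\pi^j$ is promoted to $\Pi^f$ it is an exact best response to $\sigma^{*,j}$. Hence the update $\Pi^f \leftarrow \Pi^f \cup \{\pi^j\}$ is exactly one iteration of PSRO / Double Oracle applied to the current fixed population.

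Second, I would address the role of the higher-level active policies. These policies are pre-training against a moving target that includes some active opponents, but they only enter the argument once they themselves become the lowest active policy; at that moment, their training objective changes to best-responding against the then-current meta Nash of $\Pi^f$. Because the best-response oracle is exact, whatever warm-start they obtained from pre-training does not change the eventual policy added to $\Pi^f$, only the time required to reach plateau. Thus the pipeline's parallelism affects only wall-clock speed, not the identity of the sequence $\Pi^f_0 \subset \Pi^f_1 \subset \Pi^f_2 \subset \cdots$.

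Third, I would invoke the classical Double Oracle convergence theorem of \citet{double_oracle}: in any finite two-player zero-sum normal form game, iteratively appending a best response to the meta Nash equilibrium of the current restricted game terminates in finitely many steps at a Nash equilibrium of the full game (when no further best response strictly improves the payoff, the current meta Nash equilibrium is a Nash equilibrium of $(\Pi, U)$). Since $\Pi^f$ evolves precisely in this way under P2SRO, the meta Nash equilibrium $\sigma^*$ over the terminal $\Pi^f$ is a Nash equilibrium of the game. The main (mild) obstacle is justifying that the pre-training of higher-level workers does not corrupt the lowest-level worker's best-response computation; this is handled by the exactness of the oracle together with the fact that only $\Pi^f$ (not the active set) enters the lowest worker's training distribution.
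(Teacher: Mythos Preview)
Your proof is correct and follows the same approach as the paper: the lowest active policy trains only against the meta Nash equilibrium of the fixed population $\Pi^f$, so with an oracle best response the sequence of additions to $\Pi^f$ is a valid Double Oracle trajectory, inheriting its convergence guarantee. Your version is considerably more fleshed out than the paper's one-sentence proof (in particular, you explicitly argue that the pre-training of higher-level workers is irrelevant to convergence), but the key reduction is identical.
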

\begin{proof}
Since only the lowest active policy can be submitted to the fixed policies, this policy is an oracle best response to the meta Nash distribution of the fixed policies, making P2SRO with an oracle best response equivalent to the Double Oracle algorithm.
\end{proof}

Unlike DCH which becomes unstable when early levels change, P2SRO is able to avoid this problem because early levels become fixed once they plateau. While DCH only approximates PSRO, P2SRO has equivalent guarantees to PSRO because the lowest active policy always trains against a fixed meta Nash equilibrium before plateauing and becoming fixed itself. This fixed meta Nash distribution that it trains against is in principle the same as the one that PSRO would train against. The only difference between P2SRO and PSRO is that the extra workers in P2SRO are able to get a head-start by pre-training on lower level policies while those are still training. Therefore, P2SRO inherits the convergence guarantees from PSRO while scaling up when multiple processors are available. 

\section{Results}



\begin{figure}
\begin{subfigure}{.5\textwidth}
  \centering
  \includegraphics[width=1\linewidth]{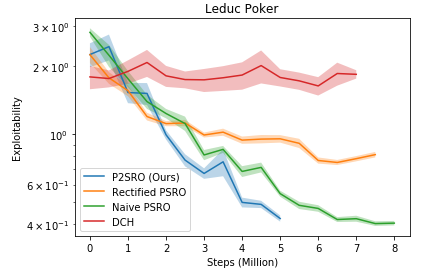}
  \caption{Leduc poker}
  \label{fig:sfig1}
\end{subfigure}%
\begin{subfigure}{.5\textwidth}
  \centering
  \includegraphics[width=1\linewidth]{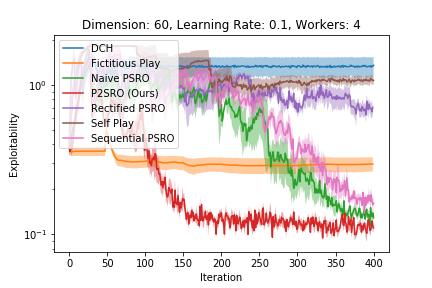}
  \caption{Random Symmetric Normal Form Games}
  \label{fig:sfig2}
\end{subfigure}
\caption{Exploitability of Algorithms on Leduc poker and Random Symmetric Normal Form Games}
\label{fig:random_nf_games}
\end{figure}

We compare P2SRO with DCH, Rectified PSRO, and a naive way of parallelizing PSRO that we term Naive PSRO. Naive PSRO is a way of parallelizing PSRO where each additional worker trains against the same meta Nash equilibrium of the fixed policies. Naive PSRO is beneficial when randomness in the reinforcement learning algorithm leads to a diversity of trained policies, and in our experiments it performs only slightly better than PSRO. Additionally, in random normal form game experiments, we include the original, non-parallel PSRO algorithm, termed sequential PSRO, and non-parallelized self-play, where a single policy trains against the latest policy in the population.  


We find that DCH fails to reliably converge to an approximate Nash equilibrium across random symmetric normal form games and small poker games. We believe this is because early levels can randomly change even after they have plateaued, causing instability in higher levels. In our experiments, we analyze the behavior of DCH with a learning rate of 1 in random normal form games. We hypothesized that DCH with a learning rate of 1 would be equivalent to the double oracle algorithm and converge to an approximate Nash. However, we found that the best response to a fixed set of lower levels can be different in each iteration due to randomness in calculating a meta Nash equilibrium. This causes a ripple effect of instability through the higher levels. We find that DCH almost never converges to an approximate Nash equilibrium in random normal form games.

Although not introduced in the original paper, 
we find that DCH converges to an approximate Nash equilibrium with an annealed learning rate.
An annealed learning rate allows early levels to not continually change, so the variance of all of the levels can tend to zero. Reinforcement learning algorithms have been found to empirically converge to approximate Nash equilibria with annealed learning rates \citep{qpg, bowling2002multiagent}. We find that DCH with an annealed learning rate does converge to an approximate Nash equilibrium, but it can converge slowly depending on the rate of annealing. Furthermore, annealing the learning rate can be difficult to tune with deep reinforcement learning, and can slow down training considerably.

\subsection{Random Symmetric Normal Form Games}
For each experiment, we generate a random symmetric zero-sum normal form game of dimension $n$ by generating a random antisymmetric matrix $P$. Each element in the upper triangle is distributed uniformly: $\forall i < j \leq n$,  $a_{i,j} \sim \textsc{Uniform}(-1,1)$. Every element in the lower triangle is set to be the negative of its diagonal counterpart: $\forall j < i \leq n$, $a_{i,j} = -a_{j,i}$. The diagonal elements are equal to zero: $a_{i,i}=0$. The matrix defines the utility of two pure strategies to the row player. A strategy $\pi \in \Delta^n$ is a distribution over the $n$ pure strategies of the game given by the rows (or equivalently, columns) of the matrix. In these experiments we can easily compute an exact best response to a strategy 
and do not use reinforcement learning to update each strategy. Instead, as a strategy $\pi$ "trains" against another strategy $\hat{\pi}$, it is updated by a learning rate $r$ multiplied by the best response to that strategy: $\pi' = r \BR(\hat{\pi}) + (1-r)\pi$.


Figure \ref{fig:random_nf_games} show results for each algorithm on random symmetric normal form games of dimension 60, about the same dimension of the normal form of Kuhn poker. We run each algorithm on five different random symmetric normal form games. We report the mean exploitability over time of these algorithms and add error bars corresponding to the standard error of the mean. P2SRO reaches an approximate Nash equilibrium much faster than the other algorithms. Additional experiments on different dimension games and different learning rates are included in the supplementary material. In each experiment, P2SRO converges to an approximate Nash equilibrium much faster than the other algorithms.  



\begin{figure}
\begin{floatrow}
\ffigbox{%
  \includegraphics[width=1\linewidth]{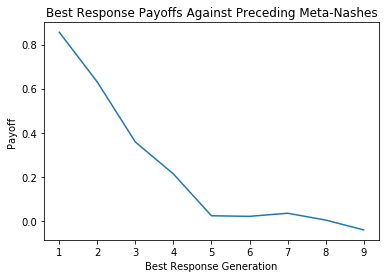}
}{%
  \caption{Barrage Best Response Payoffs Over Time}%
}
\capbtabbox{%
  \begin{tabular}{lc}
    \toprule
    \cmidrule(r){1-2}
    Name     & P2SRO Win Rate vs. Bot \\
    \midrule
    Asmodeus     & 81\%    \\
    Celsius     & 70\%      \\
    Vixen &  69\%     \\
    Celsius1.1     & 65\%     \\
    \textbf{All Bots Average} & \textbf{71\%} \\
    \bottomrule
  \end{tabular}
}{%
  \caption{Barrage P2SRO Results vs. Existing Bots}%
}
\end{floatrow}
\end{figure}

\subsection{Leduc Poker}
Leduc poker is played with a deck of six cards of two suits with three cards each. Each player bets one chip as an ante, then each player is dealt one card. After, there is a a betting round and then another card is dealt face up, followed by a second betting round. If a player's card is the same rank as the public card, they win. Otherwise, the player whose card has the higher rank wins. We run the following parallel PSRO algorithms on Leduc: P2SRO, DCH, Rectified PSRO, and Naive PSRO. We run each algorithm for three random seeds with three workers each. Results are shown in Figure \ref{fig:random_nf_games}. We find that P2SRO is much faster than the other algorithms, reaching 0.4 exploitability almost twice as soon as Naive PSRO. DCH and Rectified PSRO never reach a low exploitability.  




\subsection{Barrage Stratego}
Barrage Stratego is a smaller variant of the board game Stratego that is played competitively by humans. The board consists of a ten-by-ten grid with two two-by-two barriers in the middle. 
Initially, each player only knows the identity of their own eight pieces. At the beginning of the game, each player is allowed to place these pieces anywhere on the first four rows closest to them. More details about the game are included in the supplementary material.



We find that the approximate exploitability of the meta-Nash equilibrium of the population decreases over time as measured by the performance of each new best response. This is shown in Figure 3, where the payoff is 1 for winning and -1 for losing. We compare to all existing bots that are able to play Barrage Stratego. These bots include: Vixen, Asmodeus, and Celsius. Other bots such as Probe and Master of the Flag exist, but can only play Stratego and not Barrage Stratego. We show results of P2SRO against the bots in Table 1. We find that P2SRO is able to beat these existing bots by $71\%$ on average after $820,000$ episodes, and has a win rate of over $65\%$ against each bot. We introduce an open-source environment for Stratego, Barrage Stratego, and smaller Stratego games at \url{https://github.com/JBLanier/stratego_env}.

\section*{Broader Impact}
Stratego and Barrage Stratego are very large imperfect information board games played by many around the world. Although variants of self-play reinforcement learning have achieved grandmaster level performance on video games, it is unclear if these algorithms could work on Barrage Stratego or Stratego because they are not principled and fail on smaller games. We believe that P2SRO will be able to achieve increasingly good performance on Barrage Stratego and Stratego as more time and compute are added to the algorithm. We are currently training P2SRO on Barrage Stratego and we hope that the research community will also take interest in beating top humans at these games as a challenge and inspiration for artificial intelligence research.

This research focuses on how to scale up algorithms for computing approximate Nash equilibria in large games. These methods are very compute-intensive when applied to large games. Naturally, this favors large tech companies or governments with enough resources to apply this method for large, complex domains, including in real-life scenarios such as stock trading and e-commerce. It is hard to predict who might be put at an advantage or disadvantage as a result of this research, and it could be argued that powerful entities would gain by reducing their exploitability. However, the same players already do and will continue to benefit from information and computation gaps by exploiting suboptimal behavior of disadvantaged parties. It is our belief that, in the long run, preventing exploitability and striving as much as practical towards a provably efficient equilibrium can serve to level the field, protect the disadvantaged, and promote equity and fairness.


\begin{ack}
SM and PB in part supported by grant NSF 1839429 to PB.
\end{ack}

\bibliographystyle{abbrvnat}
\bibliography{my_bib.bib}

\begin{thebibliography}{28}
\providecommand{\natexlab}[1]{#1}
\providecommand{\url}[1]{\texttt{#1}}
\expandafter\ifx\csname urlstyle\endcsname\relax
  \providecommand{\doi}[1]{doi: #1}\else
  \providecommand{\doi}{doi: \begingroup \urlstyle{rm}\Url}\fi

\bibitem[Balduzzi et~al.(2019)Balduzzi, Garnelo, Bachrach, Czarnecki, Perolat,
  Jaderberg, and Graepel]{rectified_psro}
D.~Balduzzi, M.~Garnelo, Y.~Bachrach, W.~Czarnecki, J.~Perolat, M.~Jaderberg,
  and T.~Graepel.
\newblock Open-ended learning in symmetric zero-sum games.
\newblock In \emph{International Conference on Machine Learning}, pages
  434--443, 2019.

\bibitem[Berner et~al.(2019)Berner, Brockman, Chan, Cheung, D{\k{e}}biak,
  Dennison, Farhi, Fischer, Hashme, Hesse, et~al.]{dota}
C.~Berner, G.~Brockman, B.~Chan, V.~Cheung, P.~D{\k{e}}biak, C.~Dennison,
  D.~Farhi, Q.~Fischer, S.~Hashme, C.~Hesse, et~al.
\newblock Dota 2 with large scale deep reinforcement learning.
\newblock \emph{arXiv preprint arXiv:1912.06680}, 2019.

\bibitem[Bowling and Veloso(2002)]{bowling2002multiagent}
M.~Bowling and M.~Veloso.
\newblock Multiagent learning using a variable learning rate.
\newblock \emph{Artificial Intelligence}, 136\penalty0 (2):\penalty0 215--250,
  2002.

\bibitem[Brown and Sandholm(2018)]{brown2018superhuman}
N.~Brown and T.~Sandholm.
\newblock Superhuman ai for heads-up no-limit poker: Libratus beats top
  professionals.
\newblock \emph{Science}, 359\penalty0 (6374):\penalty0 418--424, 2018.

\bibitem[Brown et~al.(2019)Brown, Lerer, Gross, and Sandholm]{deep_cfr}
N.~Brown, A.~Lerer, S.~Gross, and T.~Sandholm.
\newblock Deep counterfactual regret minimization.
\newblock In \emph{International Conference on Machine Learning}, pages
  793--802, 2019.

\bibitem[Christodoulou(2019)]{discrete_sac}
P.~Christodoulou.
\newblock Soft actor-critic for discrete action settings.
\newblock \emph{arXiv preprint arXiv:1910.07207}, 2019.

\bibitem[Fudenberg et~al.(1998)Fudenberg, Drew, Levine, and
  Levine]{theory_of_learning_in_games}
D.~Fudenberg, F.~Drew, D.~K. Levine, and D.~K. Levine.
\newblock \emph{The theory of learning in games}.
\newblock The MIT Press, 1998.

\bibitem[Gruslys et~al.(2020)Gruslys, Lanctot, Munos, Timbers, Schmid, Perolat,
  Morrill, Zambaldi, Lespiau, Schultz, et~al.]{gruslys2020advantage}
A.~Gruslys, M.~Lanctot, R.~Munos, F.~Timbers, M.~Schmid, J.~Perolat,
  D.~Morrill, V.~Zambaldi, J.-B. Lespiau, J.~Schultz, et~al.
\newblock The advantage regret-matching actor-critic.
\newblock \emph{arXiv preprint arXiv:2008.12234}, 2020.

\bibitem[Haarnoja et~al.(2018)Haarnoja, Zhou, Abbeel, and Levine]{sac}
T.~Haarnoja, A.~Zhou, P.~Abbeel, and S.~Levine.
\newblock Soft actor-critic: Off-policy maximum entropy deep reinforcement
  learning with a stochastic actor.
\newblock In \emph{International Conference on Machine Learning}, pages
  1861--1870, 2018.

\bibitem[Heinrich and Silver(2016)]{nfsp}
J.~Heinrich and D.~Silver.
\newblock Deep reinforcement learning from self-play in imperfect-information
  games.
\newblock \emph{arXiv preprint arXiv:1603.01121}, 2016.

\bibitem[Heinrich et~al.(2015)Heinrich, Lanctot, and Silver]{xfp}
J.~Heinrich, M.~Lanctot, and D.~Silver.
\newblock Fictitious self-play in extensive-form games.
\newblock In \emph{International Conference on Machine Learning}, pages
  805--813, 2015.

\bibitem[Jaderberg et~al.(2019)Jaderberg, Czarnecki, Dunning, Marris, Lever,
  Castaneda, Beattie, Rabinowitz, Morcos, Ruderman, et~al.]{pbt}
M.~Jaderberg, W.~M. Czarnecki, I.~Dunning, L.~Marris, G.~Lever, A.~G.
  Castaneda, C.~Beattie, N.~C. Rabinowitz, A.~S. Morcos, A.~Ruderman, et~al.
\newblock Human-level performance in 3d multiplayer games with population-based
  reinforcement learning.
\newblock \emph{Science}, 364\penalty0 (6443):\penalty0 859--865, 2019.

\bibitem[Jug and Schadd(2009)]{stratego_bots}
S.~Jug and M.~Schadd.
\newblock The 3rd stratego computer world championship.
\newblock \emph{Icga Journal}, 32\penalty0 (4):\penalty0 233, 2009.

\bibitem[Lanctot et~al.(2017)Lanctot, Zambaldi, Gruslys, Lazaridou, Tuyls,
  P{\'e}rolat, Silver, and Graepel]{psro}
M.~Lanctot, V.~Zambaldi, A.~Gruslys, A.~Lazaridou, K.~Tuyls, J.~P{\'e}rolat,
  D.~Silver, and T.~Graepel.
\newblock A unified game-theoretic approach to multiagent reinforcement
  learning.
\newblock In \emph{Advances in Neural Information Processing Systems}, pages
  4190--4203, 2017.

\bibitem[Li(2017)]{deep_rl}
Y.~Li.
\newblock Deep reinforcement learning: An overview.
\newblock \emph{arXiv preprint arXiv:1701.07274}, 2017.

\bibitem[Liang et~al.(2018)Liang, Liaw, Nishihara, Moritz, Fox, Goldberg,
  Gonzalez, Jordan, and Stoica]{rllib}
E.~Liang, R.~Liaw, R.~Nishihara, P.~Moritz, R.~Fox, K.~Goldberg, J.~Gonzalez,
  M.~Jordan, and I.~Stoica.
\newblock Rllib: Abstractions for distributed reinforcement learning.
\newblock In \emph{International Conference on Machine Learning}, pages
  3053--3062, 2018.

\bibitem[McMahan et~al.(2003)McMahan, Gordon, and Blum]{double_oracle}
H.~B. McMahan, G.~J. Gordon, and A.~Blum.
\newblock Planning in the presence of cost functions controlled by an
  adversary.
\newblock In \emph{Proceedings of the 20th International Conference on Machine
  Learning (ICML-03)}, pages 536--543, 2003.

\bibitem[Moritz et~al.(2018)Moritz, Nishihara, Wang, Tumanov, Liaw, Liang,
  Elibol, Yang, Paul, Jordan, et~al.]{ray}
P.~Moritz, R.~Nishihara, S.~Wang, A.~Tumanov, R.~Liaw, E.~Liang, M.~Elibol,
  Z.~Yang, W.~Paul, M.~I. Jordan, et~al.
\newblock Ray: A distributed framework for emerging $\{$AI$\}$ applications.
\newblock In \emph{13th $\{$USENIX$\}$ Symposium on Operating Systems Design
  and Implementation ($\{$OSDI$\}$ 18)}, pages 561--577, 2018.

\bibitem[Muller et~al.(2020)Muller, Omidshafiei, Rowland, Tuyls, Perolat, Liu,
  Hennes, Marris, Lanctot, Hughes, et~al.]{muller2019generalized}
P.~Muller, S.~Omidshafiei, M.~Rowland, K.~Tuyls, J.~Perolat, S.~Liu, D.~Hennes,
  L.~Marris, M.~Lanctot, E.~Hughes, et~al.
\newblock A generalized training approach for multiagent learning.
\newblock \emph{International Conference on Learning Representations (ICLR)},
  2020.

\bibitem[Nisan et~al.(2007)Nisan, Roughgarden, Tardos, and Vazirani]{agt}
N.~Nisan, T.~Roughgarden, E.~Tardos, and V.~V. Vazirani.
\newblock \emph{Algorithmic Game Theory}.
\newblock Cambridge University Press, 2007.

\bibitem[Schadd and Winands(2009)]{stratego_bot}
M.~Schadd and M.~Winands.
\newblock Quiescence search for stratego.
\newblock In \emph{Proceedings of the 21st Benelux Conference on Artificial
  Intelligence. Eindhoven, the Netherlands}, 2009.

\bibitem[Shoham and Leyton-Brown(2008)]{shoham2008multiagent}
Y.~Shoham and K.~Leyton-Brown.
\newblock \emph{Multiagent systems: Algorithmic, game-theoretic, and logical
  foundations}.
\newblock Cambridge University Press, 2008.

\bibitem[Silver et~al.(2017)Silver, Schrittwieser, Simonyan, Antonoglou, Huang,
  Guez, Hubert, Baker, Lai, Bolton, Chen, Lillicrap, Hui, Sifre, Driessche,
  Graepel, and Hassabis]{silver_2017}
D.~Silver, J.~Schrittwieser, K.~Simonyan, I.~Antonoglou, A.~Huang, A.~Guez,
  T.~Hubert, L.~Baker, M.~Lai, A.~Bolton, Y.~Chen, T.~Lillicrap, F.~Hui,
  L.~Sifre, G.~v.~d. Driessche, T.~Graepel, and D.~Hassabis.
\newblock Mastering the game of go without human knowledge.
\newblock \emph{Nature}, 550\penalty0 (7676):\penalty0 354--359, 10 2017.
\newblock ISSN 0028-0836.
\newblock \doi{10.1038/nature24270}.
\newblock URL \url{http:https://doi.org/10.1038/nature24270}.

\bibitem[Srinivasan et~al.(2018)Srinivasan, Lanctot, Zambaldi, P{\'e}rolat,
  Tuyls, Munos, and Bowling]{qpg}
S.~Srinivasan, M.~Lanctot, V.~Zambaldi, J.~P{\'e}rolat, K.~Tuyls, R.~Munos, and
  M.~Bowling.
\newblock Actor-critic policy optimization in partially observable multiagent
  environments.
\newblock In \emph{Advances in neural information processing systems}, pages
  3422--3435, 2018.

\bibitem[Steinberger et~al.(2020)Steinberger, Lerer, and
  Brown]{steinberger2020dream}
E.~Steinberger, A.~Lerer, and N.~Brown.
\newblock Dream: Deep regret minimization with advantage baselines and
  model-free learning.
\newblock \emph{arXiv preprint arXiv:2006.10410}, 2020.

\bibitem[Taylor and Jonker(1978)]{replicator}
P.~D. Taylor and L.~B. Jonker.
\newblock Evolutionary stable strategies and game dynamics.
\newblock \emph{Mathematical biosciences}, 40\penalty0 (1-2):\penalty0
  145--156, 1978.

\bibitem[Vinyals et~al.(2019)Vinyals, Babuschkin, Czarnecki, Mathieu, Dudzik,
  Chung, Choi, Powell, Ewalds, Georgiev, et~al.]{alphastar}
O.~Vinyals, I.~Babuschkin, W.~M. Czarnecki, M.~Mathieu, A.~Dudzik, J.~Chung,
  D.~H. Choi, R.~Powell, T.~Ewalds, P.~Georgiev, et~al.
\newblock Grandmaster level in starcraft ii using multi-agent reinforcement
  learning.
\newblock \emph{Nature}, 575\penalty0 (7782):\penalty0 350--354, 2019.

\bibitem[Zinkevich et~al.(2008)Zinkevich, Johanson, Bowling, and Piccione]{cfr}
M.~Zinkevich, M.~Johanson, M.~Bowling, and C.~Piccione.
\newblock Regret minimization in games with incomplete information.
\newblock In \emph{Advances in neural information processing systems}, pages
  1729--1736, 2008.

\end{thebibliography}

\appendix
\section{Proofs of Theorems}

\begin{theorem}
Let $\sigma$ be a Nash equilibrium of a symmetric normal form game $(\Pi, U)$ and let $\Pi^e$ be the set of pure strategies in its support. Let $\Pi' \subset \Pi$ be a population that does not cover $\Pi^e \not\subseteq \Pi'$, and let $\sigma'$ be the meta Nash equilibrium of the original game restricted to strategies in $\Pi'$. Then there exists a pure strategy $\pi \in \Pi^e \setminus \Pi'$ such that $\pi$ does not lose to $\sigma'$.
\end{theorem}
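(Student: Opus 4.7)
The plan is to argue by contradiction, leveraging the antisymmetric payoff structure of the symmetric zero-sum game. Because the game is symmetric and zero-sum, the payoff matrix is antisymmetric, so the value of the game is $0$, and the symmetric Nash $\sigma$ satisfies $u(\sigma,\tau)\ge 0$ for every strategy $\tau$. The same holds for the restricted subgame on $\Pi'$: its value is $0$, so the meta Nash $\sigma'$ satisfies $u(\pi,\sigma')\le 0$ for every $\pi\in\Pi'$ (and in particular for every $\pi\in\Pi^e\cap\Pi'$).

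The main step is a decomposition. Suppose toward contradiction that every $\pi\in\Pi^e\setminus\Pi'$ strictly loses to $\sigma'$, i.e.\ $u(\pi,\sigma')<0$. I would then write
\[
u(\sigma,\sigma') \;=\; \sum_{\pi\in\Pi^e\cap\Pi'} \sigma(\pi)\,u(\pi,\sigma') \;+\; \sum_{\pi\in\Pi^e\setminus\Pi'} \sigma(\pi)\,u(\pi,\sigma').
\]
The first sum is nonpositive by the restricted-Nash property applied to each $\pi\in\Pi'$. The second sum is strictly negative, because the hypothesis $\Pi^e\not\subseteq\Pi'$ guarantees at least one $\pi\in\Pi^e\setminus\Pi'$ with $\sigma(\pi)>0$ (membership in $\Pi^e$ means positive mass under $\sigma$), and by assumption every term is strictly negative. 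Hence $u(\sigma,\sigma')<0$, contradicting the fact that the full-game Nash $\sigma$ guarantees nonnegative payoff against any strategy, including $\sigma'$. Therefore some $\pi\in\Pi^e\setminus\Pi'$ must satisfy $u(\pi,\sigma')\ge 0$, i.e.\ does not lose to $\sigma'$.

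The main obstacle is really just getting the two minimax inequalities pointed in the right direction: one has to use that the \emph{full-game} Nash $\sigma$ gives a lower bound $u(\sigma,\cdot)\ge 0$ (because $\sigma$ is optimal in the big game), while the \emph{restricted-game} Nash $\sigma'$ gives the reverse bound $u(\cdot,\sigma')\le 0$ only on strategies inside $\Pi'$. The power of the argument comes from the fact that $\sigma$ is allowed to exploit $\sigma'$ using strategies outside $\Pi'$, and this exploitation must be realized by at least one pure strategy in the support of $\sigma$, which lies in $\Pi^e$. Everything else is a direct linearity-of-expectation computation and does not require any additional tools beyond the minimax/value characterization of symmetric zero-sum Nash equilibria.
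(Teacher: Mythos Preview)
Your proof is correct and follows essentially the same contradiction strategy as the paper: both show that if every $\pi\in\Pi^e\setminus\Pi'$ strictly lost to $\sigma'$, then $u(\sigma,\sigma')<0$, contradicting the Nash property of $\sigma$ in the full game. Your decomposition of $u(\sigma,\sigma')$ over $\Pi^e\cap\Pi'$ versus $\Pi^e\setminus\Pi'$ is slightly more direct than the paper's (which first subtracts $\sigma'^\intercal G\sigma'=0$ and then splits on the sign of $\sigma(\pi)-\sigma'(\pi)$), but the underlying idea is identical.
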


\begin{proof}
$\sigma'$ is a meta Nash equilibrium, implying $\sigma'^\intercal G \sigma' = 0$, where $G$ is the payoff matrix for the row player. In fact, each policy $\pi$ in the support $\Pi'^e$ of $\sigma'$ has $1_\pi^\intercal G \sigma' = 0$, where $1_\pi$ is the one-hot encoding of $\pi$ in $\Pi$.

Consider the sets $\Pi^+ = \{\pi: \sigma(\pi) > \sigma'(\pi)\} = \Pi^e \setminus \Pi'$ and $\Pi^- = \{\pi: \sigma(\pi) < \sigma'(\pi)\} \subseteq \Pi'^e$. 
Note the assumption that $\Pi^+$ is not empty. If each $\pi \in \Pi^+$ had $1_\pi^\intercal G \sigma' < 0$, we would have
\begin{align*}
\sigma^\intercal G \sigma' = (\sigma - \sigma')^\intercal G \sigma' &= \sum_{\pi \in \Pi^+} (\sigma(\pi) - \sigma'(\pi)) 1_\pi^\intercal G \sigma' + \sum_{\pi \in \Pi^-} (\sigma(\pi) - \sigma'(\pi)) 1_\pi^\intercal G \sigma' \\
&= \sum_{\pi \in \Pi^+} (\sigma(\pi) - \sigma'(\pi)) 1_\pi^\intercal G \sigma' < 0,
\end{align*}
in contradiction to $\sigma$ being a Nash equilibrium. We conclude that there must exist $\pi \in \Pi^+$ with $1_\pi^\intercal G \sigma' \ge 0$.
\end{proof}

\section{Barrage Stratego Details}
\begin{figure}
    \centering
    \includegraphics[width=70mm]{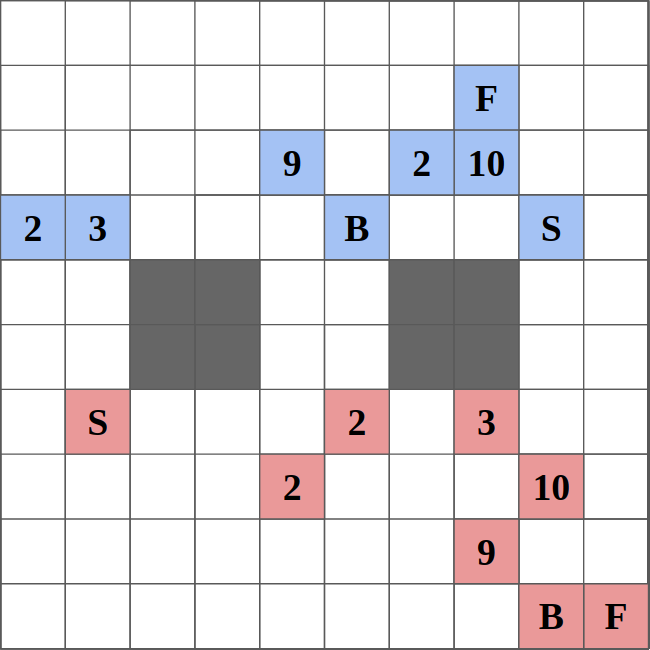}
    \caption{Valid Barrage Stratego Setup (note that the piece values are not visible to the other player)}
    \label{fig:barrage}
\end{figure}

Barrage is a smaller variant of the board game Stratego that is played competitively by humans. The board consists of a ten-by-ten grid with two two-by-two barriers in the middle (see image for details). Each player has eight pieces, consisting of one Marshal, one General, one Miner, two Scouts, one Spy, one Bomb, and one Flag. Crucially, each player only knows the identity of their own pieces. At the beginning of the game, each player is allowed to place these pieces anywhere on the first four rows closest to them. 

The Marshal, General, Spy, and Miner may move only one step to any adjacent space but not diagonally. Bomb and Flag pieces cannot be moved. The Scout may move in a straight line like a rook in chess. A player can attack by moving a piece onto a square occupied by an opposing piece. Both players then reveal their piece's rank and the weaker piece gets removed. If the pieces are of equal rank then both get removed. The Marshal has higher rank than all other pieces, the General has higher rank than all other beside the Marshal, the Miner has higher rank than the Scout, Spy, Flag, and Bomb, the Scout has higher rank than the Spy and Flag, and the Spy has higher rank than the Flag and the Marshal when it attacks the Marshal. Bombs cannot attack but when another piece besides the Miner attacks a Bomb, the Bomb has higher rank. The player who captures his/her opponent's Flag or prevents the other player from moving any piece wins.

\section{Additional Random Normal Form Games Results}
We compare the exploitability over time of P2SRO with DCH, Naive PSRO, Rectified PSRO, Self Play, and Sequential PSRO. We run 5 experiments for each set of dimension, learning rate, and number of parallel workers and record the average exploitability over time. We run experiments on dimensions of size 15, 30, 45, 60, and 120, learning rates of 0.1, 0.2, and 0.5, and 4, 8, and 16 parallel workers. We find that not only does P2SRO converge to an approximate Nash equilibrium in every experiment, but that it performs as good as or better than all other algorithms in every experiment. We also find that the relative performance of P2SRO versus the other algorithms seems to improve as the dimension of the game improves. 

\begin{figure}
\begin{subfigure}{.5\textwidth}
  \centering
  \includegraphics[width=1\linewidth]{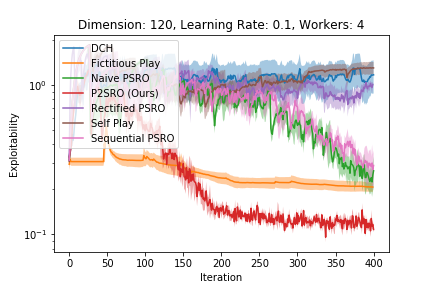}
\end{subfigure}%
\begin{subfigure}{.5\textwidth}
  \centering
  \includegraphics[width=1\linewidth]{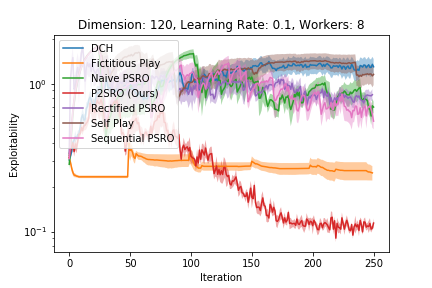}
\end{subfigure}
\begin{subfigure}{.5\textwidth}
  \centering
  \includegraphics[width=1\linewidth]{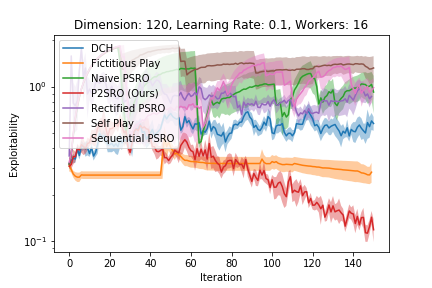}
\end{subfigure}%
\begin{subfigure}{.5\textwidth}
  \centering
  \includegraphics[width=1\linewidth]{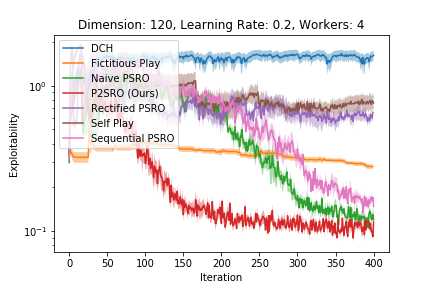}
\end{subfigure}
\begin{subfigure}{.5\textwidth}
  \centering
  \includegraphics[width=1\linewidth]{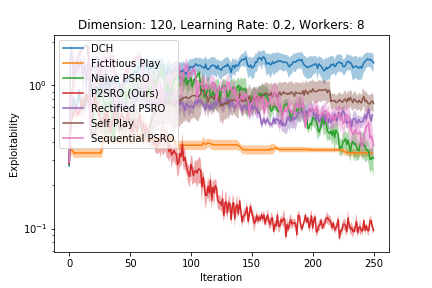}
\end{subfigure}%
\begin{subfigure}{.5\textwidth}
  \centering
  \includegraphics[width=1\linewidth]{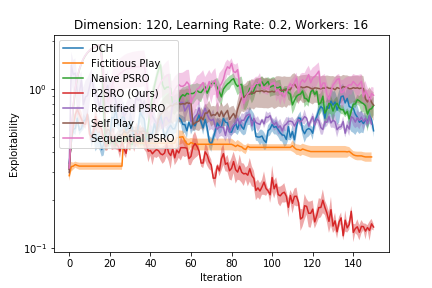}
\end{subfigure}
\begin{subfigure}{.5\textwidth}
  \centering
  \includegraphics[width=1\linewidth]{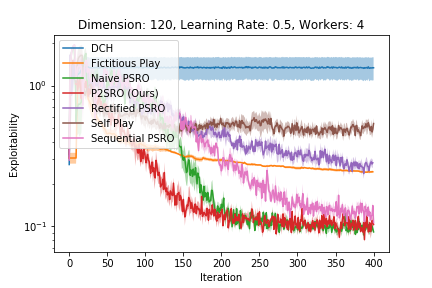}
\end{subfigure}%
\begin{subfigure}{.5\textwidth}
  \centering
  \includegraphics[width=1\linewidth]{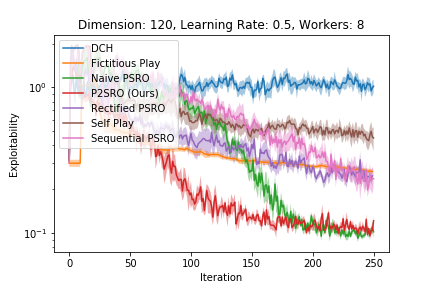}
\end{subfigure}
\begin{subfigure}{.5\textwidth}
  \centering
  \includegraphics[width=1\linewidth]{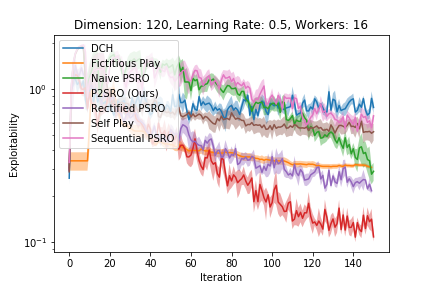}
\end{subfigure}%
\end{figure}

\begin{figure}
\begin{subfigure}{.5\textwidth}
  \centering
  \includegraphics[width=1\linewidth]{fp_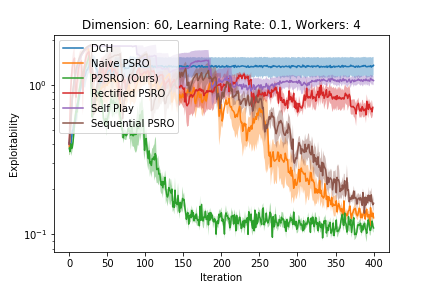}
\end{subfigure}%
\begin{subfigure}{.5\textwidth}
  \centering
  \includegraphics[width=1\linewidth]{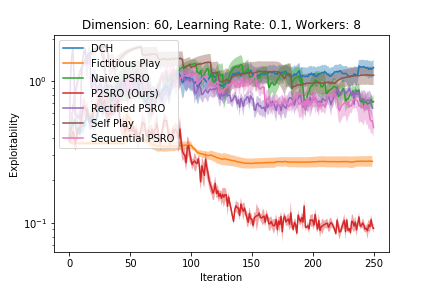}
\end{subfigure}
\begin{subfigure}{.5\textwidth}
  \centering
  \includegraphics[width=1\linewidth]{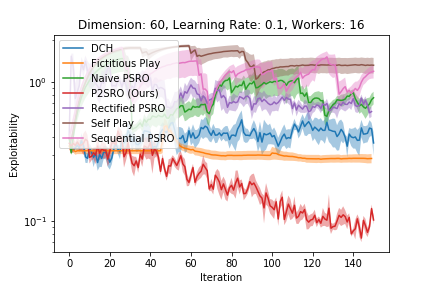}
\end{subfigure}%
\begin{subfigure}{.5\textwidth}
  \centering
  \includegraphics[width=1\linewidth]{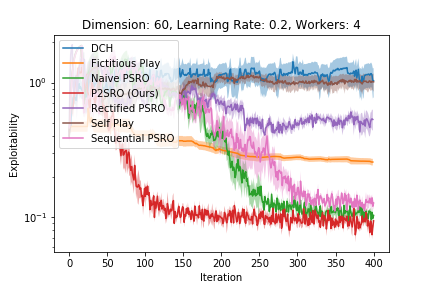}
\end{subfigure}
\begin{subfigure}{.5\textwidth}
  \centering
  \includegraphics[width=1\linewidth]{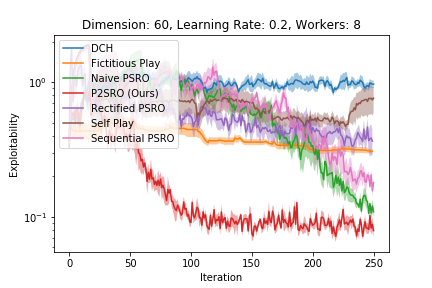}
\end{subfigure}%
\begin{subfigure}{.5\textwidth}
  \centering
  \includegraphics[width=1\linewidth]{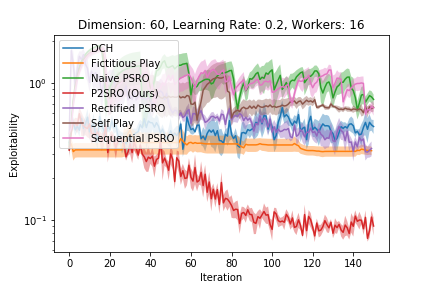}
\end{subfigure}
\begin{subfigure}{.5\textwidth}
  \centering
  \includegraphics[width=1\linewidth]{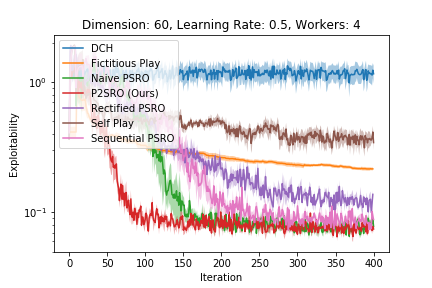}
\end{subfigure}%
\begin{subfigure}{.5\textwidth}
  \centering
  \includegraphics[width=1\linewidth]{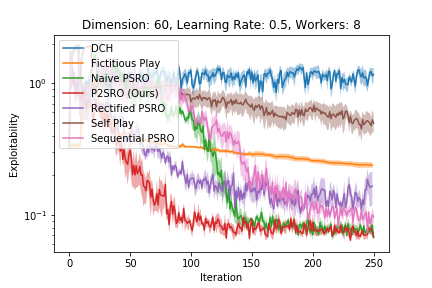}
\end{subfigure}
\begin{subfigure}{.5\textwidth}
  \centering
  \includegraphics[width=1\linewidth]{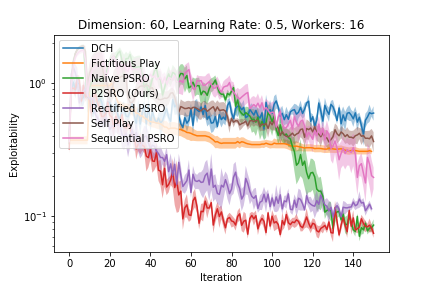}
\end{subfigure}
\end{figure}

\begin{figure}
\begin{subfigure}{.5\textwidth}
  \centering
  \includegraphics[width=1\linewidth]{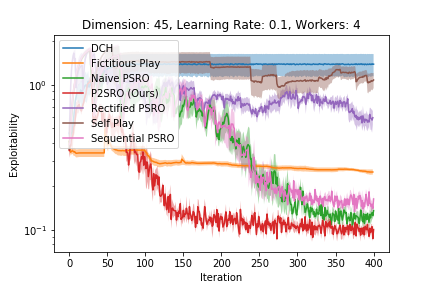}
\end{subfigure}%
\begin{subfigure}{.5\textwidth}
  \centering
  \includegraphics[width=1\linewidth]{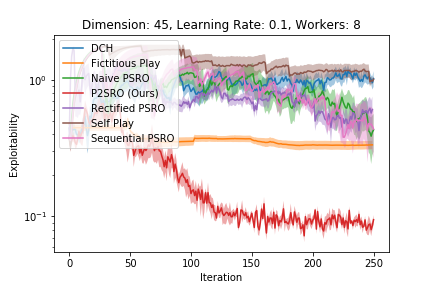}
\end{subfigure}
\begin{subfigure}{.5\textwidth}
  \centering
  \includegraphics[width=1\linewidth]{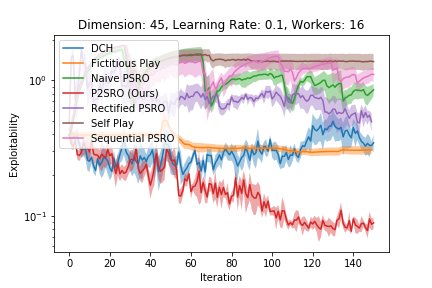}
\end{subfigure}%
\begin{subfigure}{.5\textwidth}
  \centering
  \includegraphics[width=1\linewidth]{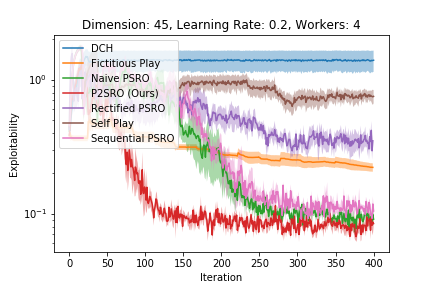}
\end{subfigure}
\begin{subfigure}{.5\textwidth}
  \centering
  \includegraphics[width=1\linewidth]{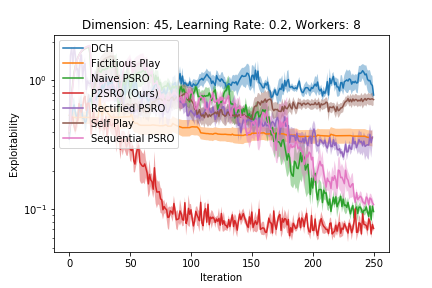}
\end{subfigure}%
\begin{subfigure}{.5\textwidth}
  \centering
  \includegraphics[width=1\linewidth]{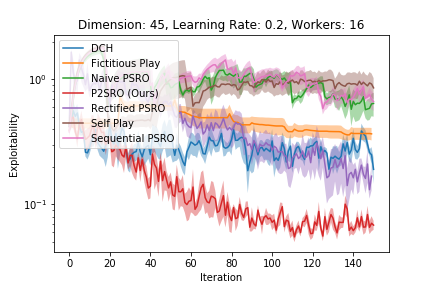}
\end{subfigure}
\begin{subfigure}{.5\textwidth}
  \centering
  \includegraphics[width=1\linewidth]{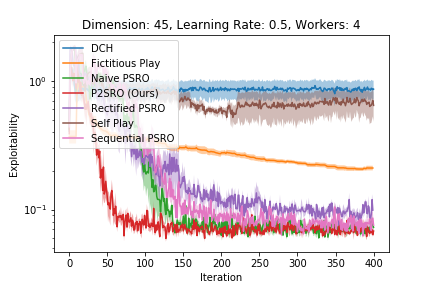}
\end{subfigure}%
\begin{subfigure}{.5\textwidth}
  \centering
  \includegraphics[width=1\linewidth]{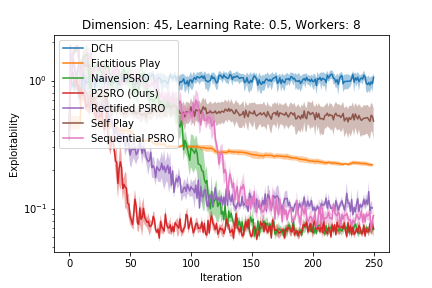}
\end{subfigure}
\begin{subfigure}{.5\textwidth}
  \centering
  \includegraphics[width=1\linewidth]{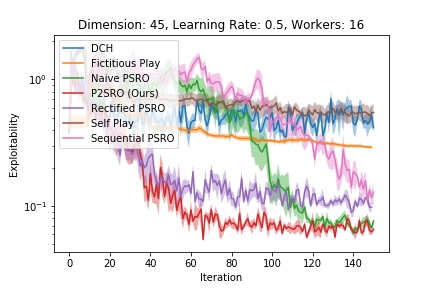}
\end{subfigure}%
\end{figure}

\begin{figure}
\begin{subfigure}{.5\textwidth}
  \centering
  \includegraphics[width=1\linewidth]{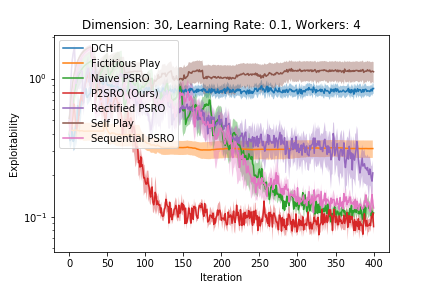}
\end{subfigure}%
\begin{subfigure}{.5\textwidth}
  \centering
  \includegraphics[width=1\linewidth]{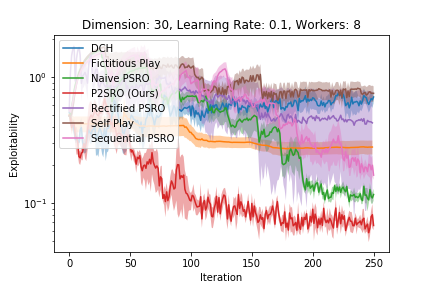}
\end{subfigure}
\begin{subfigure}{.5\textwidth}
  \centering
  \includegraphics[width=1\linewidth]{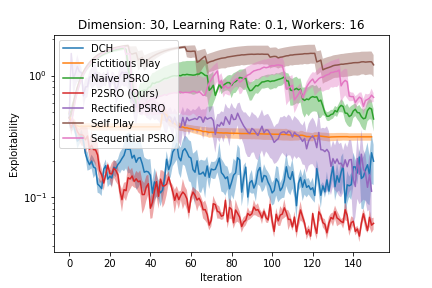}
\end{subfigure}%
\begin{subfigure}{.5\textwidth}
  \centering
  \includegraphics[width=1\linewidth]{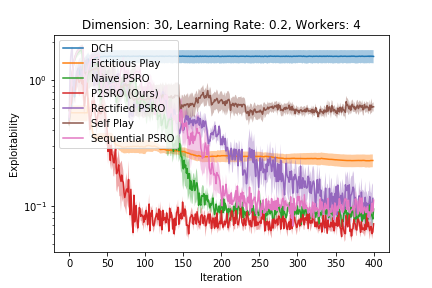}
\end{subfigure}
\begin{subfigure}{.5\textwidth}
  \centering
  \includegraphics[width=1\linewidth]{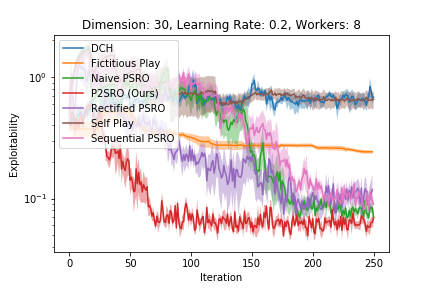}
\end{subfigure}%
\begin{subfigure}{.5\textwidth}
  \centering
  \includegraphics[width=1\linewidth]{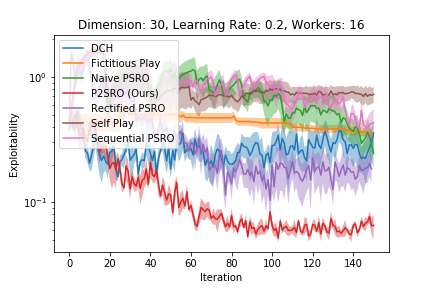}
\end{subfigure}
\begin{subfigure}{.5\textwidth}
  \centering
  \includegraphics[width=1\linewidth]{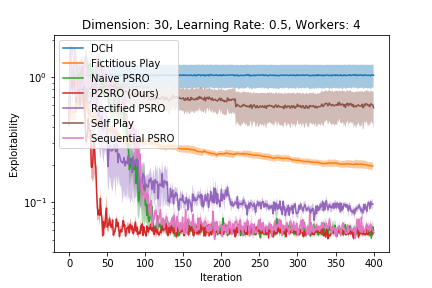}
\end{subfigure}%
\begin{subfigure}{.5\textwidth}
  \centering
  \includegraphics[width=1\linewidth]{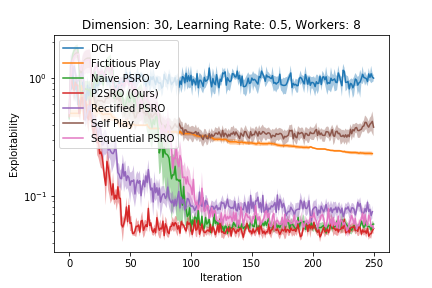}
\end{subfigure}
\begin{subfigure}{.5\textwidth}
  \centering
  \includegraphics[width=1\linewidth]{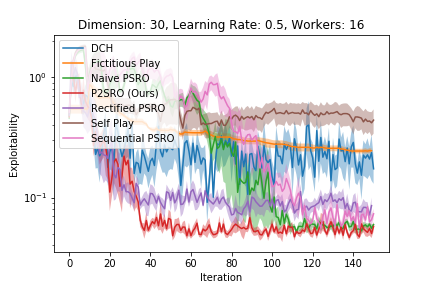}
\end{subfigure}
\end{figure}

\begin{figure}
\begin{subfigure}{.5\textwidth}
  \centering
  \includegraphics[width=1\linewidth]{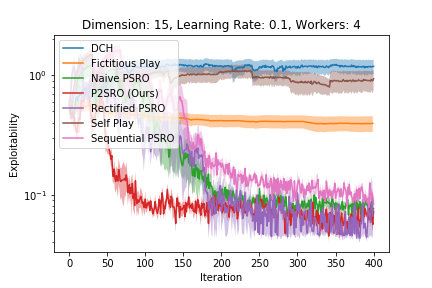}
\end{subfigure}%
\begin{subfigure}{.5\textwidth}
  \centering
  \includegraphics[width=1\linewidth]{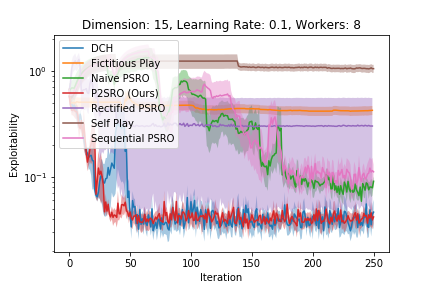}
\end{subfigure}
\begin{subfigure}{.5\textwidth}
  \centering
  \includegraphics[width=1\linewidth]{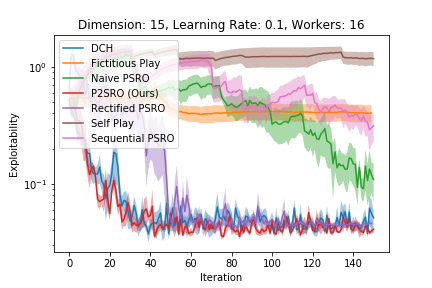}
\end{subfigure}%
\begin{subfigure}{.5\textwidth}
  \centering
  \includegraphics[width=1\linewidth]{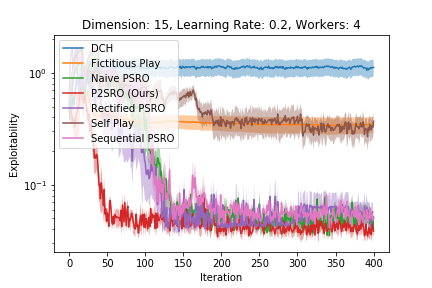}
\end{subfigure}
\begin{subfigure}{.5\textwidth}
  \centering
  \includegraphics[width=1\linewidth]{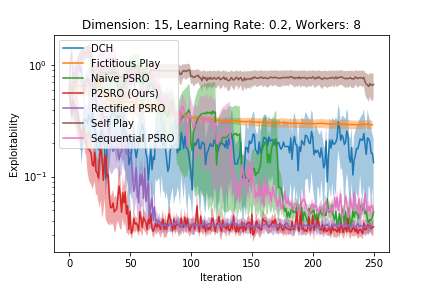}
\end{subfigure}%
\begin{subfigure}{.5\textwidth}
  \centering
  \includegraphics[width=1\linewidth]{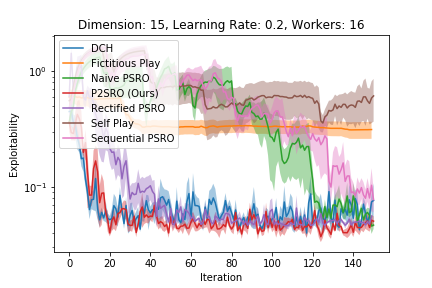}
\end{subfigure}
\begin{subfigure}{.5\textwidth}
  \centering
  \includegraphics[width=1\linewidth]{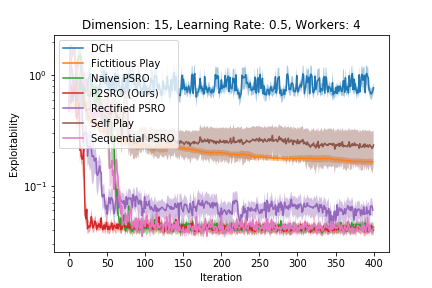}
\end{subfigure}%
\begin{subfigure}{.5\textwidth}
  \centering
  \includegraphics[width=1\linewidth]{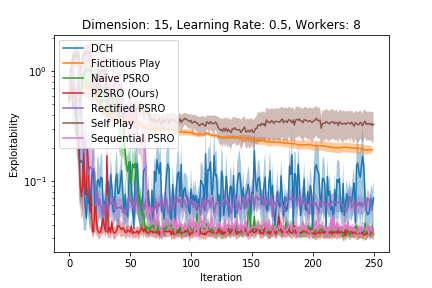}
\end{subfigure}
\begin{subfigure}{.5\textwidth}
  \centering
  \includegraphics[width=1\linewidth]{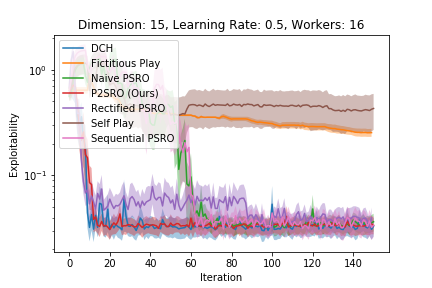}
\end{subfigure}
\end{figure}

\end{document}